\newcommand{\DEF}{\sl}
\newcommand{\lfactor}{T}
\newcommand{\rfactor}{U}
\newcommand{\conv}{\mathop{\mathrm{conv}}}
\newcommand{\nnegrk}{\mathop{\mathrm{rank}_+}} 
\newcommand{\xc}{\mathop{\mathrm{xc}}} 
\newtheorem{thm}{Theorem}
\newtheorem{lem}[thm]{Lemma}
\theoremstyle{remark}
\newtheorem{claim}[thm]{Claim}
\newtheorem*{claim*}{Claim}
\newenvironment{proofofclaim}{\vspace{1ex}\noindent{\emph{Proof of claim.}}\hspace{0.5em}}
   	    {\hfill$\lozenge$\vspace{1ex}}
\newcommand{\vol}{\textrm{vol}}
\begin{document}

\title{Extended formulations for polygons}


{\author{Samuel Fiorini \thanks{ Supported by the \emph{Actions de Recherche Concert\'ees} (ARC) fund of the \emph{Communaut\'e fran\c{c}aise de Belgique.}} \\  Department of Mathematics, \\ Universit\'e Libre de Bruxelles, Belgium. \\ \texttt{sfiorini@ulb.ac.be} \and 
Thomas Rothvo\ss\thanks{Supported by Feodor Lynen Fellowship of the Alexander von Humboldt Foundation, ONR grant N00014-11-1-0053 and NSF contract CCF-0829878.} \\ Department of Mathematics, \\ MIT, USA. \\ \texttt{rothvoss@mit.edu} \and 
Hans Raj Tiwary\thanks{Supported by \emph{Fonds National de la Recherche Scientifique} (F.R.S.--FNRS).}~\thanks{Communicating Author.} \\  Department of Mathematics, \\ Universit\'e Libre de Bruxelles, Belgium. \\ \texttt{htiwary@ulb.ac.be }}

\maketitle

\begin{abstract}
The extension complexity of a polytope $P$ is the smallest integer $k$ such that $P$ is the projection of a polytope $Q$ with $k$ facets. We study the extension complexity of $n$-gons in the plane. First, we give a new proof that the extension complexity of regular $n$-gons is $O(\log n)$, a result originating from work by Ben-Tal and Nemirovski (2001). Our proof easily generalizes to other permutahedra and simplifies proofs of recent results by Goemans (2009), and Kaibel and Pashkovich (2011). Second, we prove a lower bound of $\sqrt{2n}$ on the extension complexity of generic $n$-gons. Finally, we prove that there exist $n$-gons whose vertices lie on a $O(n) \times O(n^2)$ integer grid with extension complexity $\Omega(\sqrt{n}/\sqrt{\log n})$. 
\end{abstract}


\section{Introduction}

Consider a (convex) polytope $P$ in $\mathbb{R}^d$. An {\DEF extension} (or {\DEF extended formulation}) of $P$ is a polytope $Q$ in $\mathbb{R}^e$ such that $P$ is the image of $Q$ under a linear projection from $\mathbb{R}^e$ to $\mathbb{R}^d$. The main motivation for seeking extensions $Q$ of the polytope $P$ is perhaps that the number of facets of $Q$ can sometimes be significantly smaller than that of $P$. This phenomenon has already found numerous applications in optimization, and in particular linear and integer programming. To our knowledge, systematic investigations began at the end of the 1980's with the work of Martin~\cite{Martin91} and Yannakakis~\cite{Yannakakis91}, among others. Recently, the subject is receiving an increasing amount of attention. See, e.g., the surveys by Conforti, Cornu\'ejols and Zambelli~\cite{ConfortiCornuejolsZambelli10}, Vanderbeck and Wolsey \cite{VanderbeckWolsey10}, and Kaibel~\cite{Kaibel11}.

A striking example, which is relevant to this paper, arises when $P$ is a regular $n$-gon in $\mathbb{R}^2$. As follows from results of Ben-Tal and Nemirovski~\cite{Ben-TalNemirovski01}, for such a polytope $P$, one can construct an extension $Q$ with as few as $O(\log n)$ facets. It remained an open question to determine to which extent such a dramatic decrease in the number of facets is possible when $P$ is a \emph{non-regular} $n$-gon\footnote{This was posed as an open problem during the First Cargese Workshop on Combinatorial Optimization.}. This is the main question we address in this paper. 

\begin{figure}[ht]
\begin{center}
\input{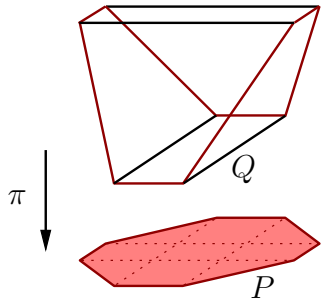_t}
\end{center}
\caption{Proof by picture that the extension complexity of a regular $8$-gon is at most $6$. Here $P \subseteq \mathbb{R}^2$ is a regular $8$-gon, $Q \subseteq \mathbb{R}^3$ is a polytope combinatorially equivalent to a $3$-cube, and $\pi : \mathbb{R}^3 \to \mathbb{R}^2$ is a linear projection map such that $\pi(Q) = P$.\label{fig:8gon}}
\end{figure}

Before giving an outline of the paper, we state a few more definitions. The {\DEF size} of an extension $Q$ is simply the number of facets of $Q$. The {\DEF extension complexity} of $P$ is the minimum size of an extension of $P$, denoted as $\xc(P)$. See Figure \ref{fig:8gon} for an illustration. 

Notice that the extension complexity of every $n$-gon is $\Omega(\log n)$. This follows from the fact that any extension $Q$ with $k$ facets has at most $2^k$ faces. Since each face of $P$ is the projection of a face of the extension $Q$, it follows that $Q$ must have at least $\log_2 f$ facets if $P$ has $f$ faces \cite{Goemans09}. Thus if $P$ is an $n$-gon, we have $\xc(P) \geqslant \log_2(2n+2) = \Omega(\log n)$. When $P$ is a regular $n$-gon, we have $\xc(P) = \Theta(\log n)$.

One of the fundamental results that can be found in Yannakakis' groundbreaking paper \cite{Yannakakis91} is a characterization of the extension complexity of a polytope in terms of the non-negative rank of its slack matrix. Although this is discussed in detail in Section \ref{sec:slack_nnegrk}, we include a brief description here. To each polytope $P$ one can associate a matrix $S(P)$ that records, in the entry that is in the $i$-th row and $j$-th column, the slack of the $j$th vertex with respect to the $i$th facet. This matrix is the `slack matrix' of $P$. In turns out that computing $\xc(P)$ amounts to determining the minimum number $r$ such that there exists a factorization of the slack matrix of $P$ as $S(P) = \lfactor\rfactor$, where $\lfactor$ is a non-negative matrix with $r$ columns and $\rfactor$ is a non-negative matrix with $r$ rows. Such a factorization is called a `rank $r$ non-negative factorization' of the slack matrix $S(P)$.

In Section \ref{sec:regular_polygons}, we give an explicit $O(\log n)$ rank non-negative factorization of the slack matrix of a regular $n$-gon. This provides a new proof that the extension complexity of every regular $n$-gon is $O(\log n)$. Our proof technique directly generalizes to other polytopes, such as the permutahedron. In particular, we obtain a new proof of the fact that the extension complexity of the $n$-permutahedron is $O(n \log n)$, a result due to Goemans \cite{Goemans09}. Our approach builds on a new proof of this result by Kaibel and Pashkovich \cite{KaibelPashkovich11}, but is different because it works by directly constructing a non-negative factorization of the slack matrix.

In Section \ref{sec:generic_polygons}, we prove that there exist $n$-gons whose extension complexity is at least $\sqrt{2n}$. However, the proof uses polygons whose coordinates are transcendental numbers, which is perhaps not entirely satisfactory. For instance, one might ask whether a similar result holds when the encoding length of each vertex of the polygon is $O(\log n)$. 

In Section \ref{sec:grid_polygons}, we settle this last question by proving the existence of $n$-gons whose vertices belong to a $O(n) \times O(n^2)$ integer grid and with extension complexity $\Omega(\sqrt{n}/\sqrt{\log n})$. This is inspired by recent work of one of the authors on the extension complexity of 0/1-polytopes \cite{Rothvoss11}.



\section{Slack matrices and non-negative factorizations}\label{sec:slack_nnegrk}

Consider a polytope $P$ in $\mathbb{R}^d$ with $m$ facets and $n$ vertices. Let $A_1 x \leqslant b_1$, \ldots, $A_m x \leqslant b_m$ denote the facet-defining inequalities of $P$, where $A_1$, \ldots, $A_m$ are row vectors. Let also $v_1$, \ldots, $v_n$ denote the vertices of $P$. The {\DEF slack matrix} of $P$ is the non-negative $m \times n$ matrix $S = S(P)$ with $S_{ij} = b_i - A_i v_j$.

A {\DEF rank $r$ non-negative factorization} of a non-negative matrix $S$ is an expression of $S$ as product $S = \lfactor\rfactor$ where $\lfactor$ and $\rfactor$ are non-negative matrices with $r$ columns and $r$ rows, respectively. The {\DEF non-negative rank} of $S$, denoted by $\nnegrk(S)$, is the minimum number $r$ such that $S$ admits a rank $r$ non-negative factorization \cite{CohenRothblum93}. 

The following theorem is (essentially) due to Yannakakis, see also \cite{FioriniKaibelPashkovichTheis11}. 

\begin{thm}[Yannakakis \cite{Yannakakis91}] \label{thm:Y91}
For all polytopes $P$, 
$$
\xc(P) = \nnegrk(S(P))\ .
$$
\end{thm}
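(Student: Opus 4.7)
The plan is to prove the two inequalities $\xc(P) \leqslant \nnegrk(S(P))$ and $\xc(P) \geqslant \nnegrk(S(P))$ separately, each by an explicit construction.

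For the easier direction, $\xc(P) \leqslant \nnegrk(S(P))$, I would start from a rank $r$ non-negative factorization $S(P) = \lfactor \rfactor$, where $\lfactor$ is $m \times r$ and $\rfactor$ is $r \times n$, and exhibit an extension of $P$ of size $r$. Writing $P = \{x \in \mathbb{R}^d : Ax \leqslant b\}$ and $V = \{v_1,\dots,v_n\}$ for the vertices, the natural candidate is
\[
Q = \{(x,y) \in \mathbb{R}^d \times \mathbb{R}^r : Ax + \lfactor y = b,\ y \geqslant 0\},
\]
whose only inequality constraints are the $r$ non-negativity constraints on $y$. The projection $\pi(x,y)=x$ is clearly contained in $P$ (since $Ax = b - \lfactor y \leqslant b$), and conversely the point $(v_j,\rfactor_{\cdot j})$ lies in $Q$ by the factorization identity $\lfactor \rfactor_{\cdot j} = S_{\cdot j} = b - A v_j$, so every vertex of $P$ is in $\pi(Q)$, and convexity closes the argument.

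For the harder direction, $\xc(P) \geqslant \nnegrk(S(P))$, I would take any extension $Q = \{y \in \mathbb{R}^e : Cy \leqslant d\}$ with $k$ facets and a linear (or affine) projection $F$ satisfying $F(Q) = P$, and build a non-negative factorization $S(P) = \lfactor \rfactor$ of rank at most $k$. The columns of $\rfactor$ come from preimages: for each vertex $v_j$ pick some $y_j \in Q$ with $Fy_j = v_j$ and set $\rfactor_{\cdot j} := d - C y_j \geqslant 0$, the slack vector of $y_j$ in $Q$. The rows of $\lfactor$ come from LP duality: for each facet-defining inequality $A_i x \leqslant b_i$ of $P$, the inequality $A_i F y \leqslant b_i$ is valid for $Q$, and because some vertex $v_j$ attains $A_i v_j = b_i$, the dual LP $\min\{\lambda^\top d : \lambda \geqslant 0,\ \lambda^\top C = A_i F\}$ achieves optimum value exactly $b_i$; take $\lfactor_{i\cdot}$ to be such a dual optimal $\lambda^\top$. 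Then
\[
(\lfactor \rfactor)_{ij} \;=\; \lfactor_{i\cdot}(d - C y_j) \;=\; b_i - A_i F y_j \;=\; b_i - A_i v_j \;=\; S_{ij},
\]
giving the desired factorization.

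The main conceptual obstacle is the dual argument in the second direction: one must justify that the LP $\max\{A_i F y : y \in Q\}$ is bounded and attained with value exactly $b_i$, and that strong duality applies. For $P$ a polytope and $Q$ a polyhedron this reduces to standard LP duality once one checks that $Q$ can be taken bounded (or that the maximum is finite because $P$ is), together with the existence of a vertex of $P$ tight for the chosen facet. Apart from this, both constructions are essentially bookkeeping: the inequalities of $Q$ in the first direction are only $y \geqslant 0$, which accounts for the size bound, and in the second direction $\rfactor$ and $\lfactor$ each have $k$ rows or columns by construction, so combining both inequalities yields the equality in the statement.
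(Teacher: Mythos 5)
Your proposal is correct in substance, but it is more complete than what the paper actually does: the paper only proves the inequality $\xc(P) \leqslant \nnegrk(S(P))$, using exactly your first construction $Q = \{(x,y) : Ax + \lfactor y = b,\ y \geqslant 0\}$, and attributes the converse to Yannakakis without proof. Your second half is the classical Yannakakis argument (slack vectors of preimages as columns of $\rfactor$, dual optimal multipliers of the lifted facet inequalities as rows of $\lfactor$), and it is sound given that $Q$ is a polytope, so that the maximum of $A_iFy$ over $Q$ is attained with value $b_i$ and strong duality applies. Two small points deserve tightening. First, in the easy direction the paper's definition of an extension requires $Q$ to be a \emph{polytope}, whereas your $Q$ is a priori only a polyhedron; the paper handles this by citing \cite{ConfortiFaenzaFioriniGrappeTiwary11} for the fact that $Q$ is bounded when $r = \nnegrk(S(P))$, and your write-up should either invoke that or argue boundedness (or intersect with a suitable box) explicitly. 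Second, in the hard direction you describe $Q$ by inequalities only and identify the number of rows of $C$ with the number $k$ of facets; if $Q$ is not full-dimensional, an inequality-only irredundant description needs extra rows for the affine hull, so to get a factorization of rank exactly $k$ you should write $Q = \{y : Cy \leqslant d,\ Ey = f\}$ with one inequality per facet, let the (sign-free) multipliers on the equations absorb the rest of the dual certificate, and observe that the equation slacks vanish at every $y_j$, so only the $k$ facet slacks enter $\lfactor\rfactor$. With these routine repairs your argument gives the full equality, which the paper itself leaves to the literature.
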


To conclude this section, we briefly indicate how to obtain extensions from non-negative factorizations, and prove half of Theorem \ref{thm:Y91}. Assuming $P = \{x \in \mathbb{R}^d : Ax \leqslant b\}$, consider a rank~$r$ non-negative factorization $S(P) = \lfactor\rfactor$ of the slack matrix of $P$. Then it can be shown that the image of the polyhedron $Q := \{(x,y) \in \mathbb{R}^{d+r} \mid Ax + \lfactor y = b, y \geqslant 0\}$ under the projection $\mathbb{R}^{d+r} \to \mathbb{R}^d : (x,y) \mapsto x$ is exactly $P$. Notice that $Q$ has at most $r$ facets. Now if we take $r = \nnegrk(S(P))$, then $Q$ is actually a polytope \cite{ConfortiFaenzaFioriniGrappeTiwary11}. Thus $Q$ is an extension of $P$ with at most $\nnegrk(S(P))$ facets, and hence $\xc(P) \leqslant \nnegrk(S(P))$.


\section{Regular polygons}\label{sec:regular_polygons}

First, we give a new proof of the tight logarithmic upper bound on the extension complexity of a regular $n$-gon. This result is implicit in work by Ben-Tal and Nemirovski~\cite{Ben-TalNemirovski01} (although for $n$ being a power of two). Another proof can be found in Kaibel and Pashkovich~\cite{KaibelPashkovich11}. Then, we discuss a generalization of the proof to related higher-dimensional polytopes.

\begin{thm}
\label{thm:regular}
Let $P$ be a regular $n$-gon in $\mathbb{R}^2$. Then $\xc(P) = O(\log n)$.
\end{thm}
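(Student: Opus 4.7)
By Theorem~\ref{thm:Y91}, it suffices to exhibit a nonnegative factorization $S(P) = \lfactor \rfactor$ in which $\lfactor$ has $O(\log n)$ columns. The plan has three steps.

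\emph{Step 1: explicit form of the slack matrix.} Place $P$ with center at the origin so that its vertices are $v_j = (\cos(2\pi j/n), \sin(2\pi j/n))$ for $j = 0,\ldots,n-1$ and each facet $i$ has outer normal at angle $(2i+1)\pi/n$ and distance $\cos(\pi/n)$ from the origin. A short computation using product-to-sum identities gives
\[ S(P)_{ij} \;=\; \cos(\pi/n) - \cos\!\bigl(\tfrac{(2(i-j)+1)\pi}{n}\bigr) \;=\; 2\sin\!\bigl(\tfrac{\pi(i-j)}{n}\bigr)\sin\!\bigl(\tfrac{\pi(i-j+1)}{n}\bigr), \]
with indices read modulo $n$. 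In particular $S(P)$ is a nonnegative circulant matrix whose entries depend only on $k := (i-j)\bmod n$, and the two factorizations shown suggest several natural ways of splitting the dependence on $i$ and $j$.

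\emph{Step 2: reflection reduction.} Next I would exploit the reflection symmetries of $P$. Fixing a reflection $\sigma$ through a diameter, the rows of $S(P)$ partition into orbits under $\sigma$, and similarly for the columns; after permuting to put these orbits together one obtains a block form for $S(P)$ in which the blocks are related by $\sigma$. A direct manipulation of these blocks, using the trigonometric identity above, should express $S(P)$ as the sum of $O(1)$ nonnegative rank-$1$ matrices plus a matrix containing an embedded (scaled, padded) copy of the slack matrix of a regular polygon with about $n/2$ vertices. This mirrors the ``reflection relations'' of Kaibel and Pashkovich~\cite{KaibelPashkovich11} but is carried out entirely at the level of the factorization.

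\emph{Step 3: recursion.} Iterating Step~2 for $\lceil \log_2 n \rceil$ levels reduces to a regular polygon of constant combinatorial size, whose slack matrix has constant nonnegative rank. Each level contributes $O(1)$ columns to $\lfactor$ and rows to $\rfactor$, so if $r(n)$ denotes the nonnegative rank produced by the construction we obtain the recurrence $r(n) \le r(\lceil n/2 \rceil) + O(1)$, which solves to $r(n) = O(\log n)$. Combined with Theorem~\ref{thm:Y91}, this yields $\xc(P) = O(\log n)$.

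\emph{The main obstacle} is to implement the reflection step directly on the slack matrix while keeping the factors nonnegative. One has to choose the $O(1)$ rank-$1$ correction matrices at each level so that after subtracting them the remaining block is indeed (a scaled, padded copy of) the slack matrix of a smaller regular polygon, rather than something with negative entries, and one has to handle separately the facets and vertices lying on or adjacent to the axis of reflection. Once these ingredients are correctly in place the recursion above delivers the $O(\log n)$ bound.
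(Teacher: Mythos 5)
Your overall strategy (fold the polygon through $O(\log n)$ axes of symmetry, each reflection costing $O(1)$ nonnegative rank) is the same as the paper's, but the step that carries the entire weight of the argument, your Step~2, is only asserted (``should express\ldots''), and as stated it is not correct. After one conditional reflection through an axis $\ell$ and subtraction of the correction terms, the residual matrix is \emph{not} a scaled, padded copy of the slack matrix of a regular $\lceil n/2\rceil$-gon. What one actually obtains, entrywise, is the slack of the folded vertex $v'$ with respect to the folded facet $F'$ of the \emph{same} $n$-gon; up to duplicating rows and columns (which does not change nonnegative rank), this is the submatrix of $S(P)$ indexed by the roughly $n/2$ facets and vertices lying in the chosen halfplane. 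In your own trigonometric notation its entries are $\cos(\pi/n)-\cos\bigl((2k+1)\pi/n\bigr)$ for a restricted range of $k$, whereas the slack matrix of a regular $m$-gon with $m\approx n/2$ has entries $\cos(\pi/m)-\cos\bigl((2k+1)\pi/m\bigr)$; no scaling or padding turns one into the other. So the recursion has to be run on these ``arc'' submatrices of the original slack matrix, and the lemma that each folding step costs only $O(1)$ nonnegative rank while producing exactly such a submatrix is precisely the missing content you flag as ``the main obstacle'' --- it is not a technicality but the heart of the proof.

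The paper closes exactly this gap with a reflection identity: if $a$ (the outer normal of $F$) and $v$ lie on opposite sides of the axis $\ell_i$, then the slack of $v$ with respect to $F$ equals the slack of the reflected pair plus the correction $2\,d(a,\ell_i)\,d(v,\ell_i)$, and the two slacks are equal when $a$ and $v$ lie on the same side. Since after $q=O(\log n)$ conditional reflections every vertex is folded onto $v_1$ and every facet onto a facet containing $v_1$, the final slack is $0$ and telescoping gives $S(P)_{Fv}=\sum_{i=0}^{q-1}c_i(F,v)$, where $c_i(F,v)=2\,d(a^{(i)},\ell_i)\,d(v^{(i)},\ell_i)$ if $a^{(i)}$ and $v^{(i)}$ are on opposite sides of $\ell_i$ and $0$ otherwise. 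Each $c_i$ has nonnegative rank at most $2$, because the ``opposite sides'' condition splits into a condition on $F$ alone and a condition on $v$ alone (this is where the pairs of coordinates $(\sqrt{2}\,d,0)$ and $(0,\sqrt{2}\,d)$ in the paper's factors come from), so one gets an explicit rank-$2q$ nonnegative factorization with no residual matrix, no base case, and no need for your circulant computation in Step~1. If you wish to keep your recursive formulation, the statement you must actually prove is this reflection identity together with the rank-$2$ bound on the correction matrix; with the residual correctly identified as a submatrix of $S(P)$ rather than a smaller regular polygon's slack matrix, your recurrence $r(n)\le r(\lceil n/2\rceil)+O(1)$ then goes through.
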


\begin{proof}
Without loss of generality, we may assume that the origin is the barycenter of $P$. After numbering the vertices of $P$ counter-clockwise as $v_1$, \ldots, $v_n$, we define a sequence $\ell_0$, \ldots, $\ell_{q-1}$ of axes of symmetry of $P$, as follows. 

Initialize $i$ to $0$, and $k$ to $n$. While $k > 1$, repeat the following steps:
\begin{itemize}
\item define $\ell_i$ as the line through the origin and the midpoint of vertices $v_{\left\lceil \frac{k}{2} \right\rceil}$ and $v_{\left\lceil \frac{k+1}{2} \right\rceil}$;\smallskip
\item replace $k$ by $\left\lfloor \frac{k+1}{2} \right\rfloor$;\medskip 
\item increment $i$. 
\end{itemize} 
Define $q$ as the final value of $i$. Thus, $q$ is the number of axes of symmetry $\ell_i$ defined. Note that when $k = k(i)$ is odd, then $\ell_i$ passes through one of the vertices of $P$. Note also that $q = O(\log n)$. For each $i = 0, \ldots, q-1$, one of the two closed halfplanes bounded by $\ell_i$ contains $v_1$. We denote it $\ell_i^+$. We denote the other by $\ell_i^-$. 

Now, consider a vertex $v$ of $P$. We define the {\DEF folding sequence} $v^{(0)}$, $v^{(1)},\ldots,  v^{(q)}$ of $v$ as follows. We let $v^{(0)} := v$, and for $i = 0, \ldots, q-1$, we let $v^{(i+1)}$ denote the image of $v^{(i)}$ by the reflection with respect to $\ell_{i}$ if $v^{(i)}$ is not in the halfspace $\ell_{i}^+$, otherwise we let $v^{(i+1)} := v^{(i)}$. In other words, $v^{(i+1)}$ is the image of $v^{(i)}$ under the {\DEF conditional reflection} with respect to halfplane $\ell_i^+$. By construction, we always have $v^{(q)} = v_1$.

Next, consider a facet $F$ of $P$. The folding sequence $F^{(0)}$, $F^{(1)}$, \ldots, $F^{(q)}$ of facet $F$ is defined similarly as the folding sequence of vertex $v$. Pick any inequality $a^T x \leqslant \beta$ defining $F$. We let $a^{(0)} := a$, and for $i = 0, \ldots, q-1$, we let $a^{(i+1)}$ denote the image of $a^{(i)}$ under the conditional reflection with respect to $\ell_i^+$. Then $F^{(i)}$ is the facet of $P$ defined by $(a^{(i)})^Tx \leqslant \beta$. The last facet $F^{(q)}$ in the folding sequence is always either the segment $[v_1,v_2]$ or the segment $[v_1,v_n]$. See Figure \ref{fig:polygon} for an illustration with $n = 15$, and thus $q = 4$.   

\begin{figure}[ht]
\begin{center}
\includegraphics[scale=0.4]{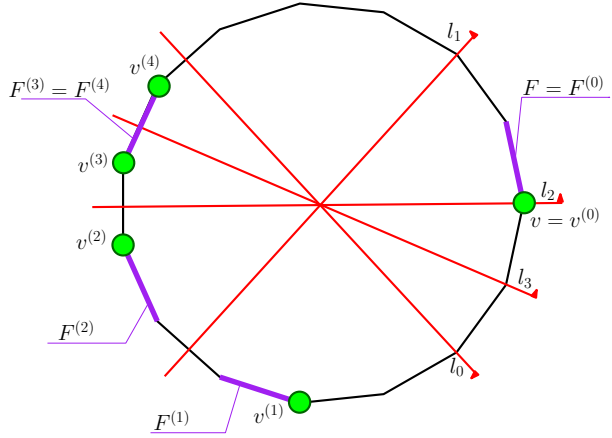}
\end{center}
\caption{A $15$-gon with four axes of symmetry, a vertex- and a facet folding sequence.}\label{fig:polygon}
\end{figure}

Finally, we define a non-negative factorization $S(P) = \lfactor\rfactor$ of the slack matrix of $P$, of rank $2q = O(\log n)$. Below, let $d(x,\ell_i)$ denote the distance of $x \in \mathbb{R}^2$ to line $\ell_i$. 

In the left factor of the factorization, the row corresponding to facet $F$ is of the form $(t_0,\ldots,t_{q-1})$, where $t_i := (\sqrt{2}\,d(a^{(i)},\ell_i),0)$ if $a^{(i)}$ is not in $\ell_i^+$ and $t_i := (0,\sqrt{2}\,d(a^{(i)},\ell_i))$ otherwise. Similarly, in the right factor, the column corresponding to vertex $v$ is of the form $(u_0,\ldots,u_{q-1})^T$, where $u_i := (0,\sqrt{2}\,d(v^{(i)},\ell_i))^T$ if $v^{(i)}$ is not in $\ell_i^+$ and $u_i := (\sqrt{2}\,d(v^{(i)},\ell_i),0)^T$ otherwise. 

The correctness of the factorization rests on the following simple observation: for $i = 0, \ldots, q-1$ the slack of $v^{(i+1)}$ with respect to $F^{(i+1)}$ equals the slack of $v^{(i)}$ with respect to $F^{(i)}$ plus some correction term.  If $a^{(i)}$ and $v^{(i)}$ are on opposite sides of $\ell_i$, then the correction term is $2d(a^{(i)},\ell_{i})d(v^{(i)},\ell_{i})$. Otherwise, it is zero (no correction is necessary). Indeed, letting $n_i$ denote a unit vector normal to $\ell_i$, and assuming that $v^{(i)}$ and $a^{(i)}$ are on opposite sides of $\ell_i$, we have
\begin{eqnarray*}
\beta - (a^{(i)})^T v^{(i)}
&= &\beta - (a^{(i)})^T (v^{(i)} - 2 (n_i^T v^{(i)}) n_i + 
2 (n_i^T v^{(i)}) n_i)\\
&= &\beta - (a^{(i+1)})^T v^{(i+1)} - 2 ((a^{(i)})^T n_i)(n_i^T v^{(i)})\\
&= &\beta - (a^{(i+1)})^T v^{(i+1)} + 2d(a^{(i)},\ell_{i})d(v^{(i)},\ell_{i})\ .
\end{eqnarray*}
When $v^{(i)}$ and $a^{(i)}$ are on the same side of $\ell_i$, we obviously have
\begin{eqnarray*}
\beta - (a^{(i)})^T v^{(i)} &= & \beta - (a^{(i+1)})^T v^{(i+1)}\ .
\end{eqnarray*}
Observe that the slack of $v^{(q)}$ with respect to $F^{(q)}$ is always $0$. The theorem follows.
\end{proof}

The {\DEF $n$-permutahedron} is the polytope of dimension $n-1$ in $\mathbb{R}^n$ whose $n!$ vertices are the points obtained by permuting the coordinates of $(1,2,\ldots,n)^T$. It has $2^n-2$ facets, defined by the inequalities $\sum_{j \in S} x_j \leqslant g(|S|)$ for all proper non-empty subsets $S$ of $[n] := \{1,2,\ldots,n\}$, where $g(S) := {n+1 \choose 2} - {n-|S|+1 \choose 2}$.

Let $j$ and $k$ denote two elements of $[n]$ such that $j < k$. We denote $H_{j,k}$ the hyperplane defined by $x_j = x_k$, and $H_{j,k}^+$ the closed halfspace defined by $x_j \leqslant x_k$. Applying the conditional reflection with respect to $H_{j,k}^+$ to a vector $x \in \mathbb{R}^n$ amounts to swapping the coordinates $x_j$ and $x_k$ if and only if $x_j > x_k$. Intuitively, the conditional reflection with respect to $H_{j,k}^+$ sorts the coordinates $x_j$ and $x_k$.

The proof of Theorem \ref{thm:regular} can be modified to give a new proof of the existence of $O(n \log n)$ size extension of the $n$-permutahedron \cite{Goemans09}, as follows. Because there exists a sorting network of size $O(n \log n)$ for sorting $n$ inputs, a celebrated result of Ajtai, Koml\'os and Szemer\'edi \cite{AjtaiKomlosSzemeredi83}, there exist $q = O(n \log n)$ halfspaces $H_{j_0,k_0}^+$, $H_{j_1,k_1}^+$, \ldots, $H_{j_{q-1},k_{q-1}}^+$ such that sequentially applying the conditional reflection with respect to $H_{j_i,k_i}^+$ for $i = 0, \ldots, q-1$ to \emph{any} point $x \in \mathbb{R}^n$, sorts this point $x$. 

Therefore, the folding sequence of any vertex $v$ of the $n$-permutahedron always ends with the vertex $(1,2,\ldots,n)^T$. Moreover, the folding sequence of the facet defined by $\sum_{j \in S} x_j \leqslant g(|S|)$ always ends with the facet defined by $\sum_{j=n-|S|+1}^n x_j \leqslant g(|S|)$. Note that this last facet contains the vertex $(1,2,\ldots,n)^T$. Hence the proof technique used above for a regular $n$-gon extends to the $n$-permutahedron. 

In fact, it turns out that the proof technique further extends to the permutahedron of any finite reflection group. One simply has to choose the right sequence of conditional reflections. Such sequences were constructed by Kaibel and Pashkovich \cite{KaibelPashkovich11}, with the help of Ajtai-Koml\'os-Szemer\'edi sorting networks. Thus we can reprove their main results about permutahedra of finite reflection groups. Our proof is different in the sense that we explicitly construct a non-negative factorization of the slack matrix.

\section{Generic polygons}\label{sec:generic_polygons}

We begin by recalling some basic facts about field extensions, see, e.g., Hungerford \cite{Hungerford74}, Lang \cite{Lang02}, or Stewart \cite{Stewart}. Let $L$ be a field and $K$ be a subfield of $L$. Then $L$ is an {\DEF extension field} of $K$, and $L/K$ is a {\DEF field extension}. We say that the field extension $L/K$ is {\DEF algebraic} if every element of $L$ is algebraic over $K$, that is, for each element of $L$ there exists a non-zero polynomial with coefficients in $K$ that has the element as one of its roots.

For $\alpha_1, \ldots, \alpha_q \in L$, the inclusion-wise minimal subfield of $L$ that contains both $K$ and $\{\alpha_1, \ldots, \alpha_q\}$ is denoted by $K(\{\alpha_1,\ldots,\alpha_q\})$, or simply $K(\alpha_1,\ldots,\alpha_q)$. It is also the subfield formed by all fractions $\frac{f(\alpha_1,\ldots,\alpha_q)}{g(\alpha_1,\ldots,\alpha_q)}$ where $f$ and $g$ are polynomials with coefficients in $K$ and $g(\alpha_1,\ldots,\alpha_q) \neq 0$. 

A subset $X$ of $L$ is said to be {\DEF algebraically independent} over $K$ if no non-trivial polynomial relation with coefficients in $K$ holds among the elements of $X$. The {\DEF transcendence degree} of the field extension $L/K$ is defined as the largest cardinality of an algebraically independent subset of $L$ over $K$. It is also the minimum cardinality of a subset $Y$ of $L$ such that $L/K(Y)$ is algebraic.

We say that a polygon in $\mathbb{R}^2$ is {\DEF generic} if the coordinates of its vertices are distinct and form a set that is algebraically independent over the rationals.  

\begin{thm}\label{thm:lb_trans}
If $P$ is a generic convex $n$-gon in $\mathbb{R}^2$ then $\xc(P) \geqslant \sqrt{2n}$.
\end{thm}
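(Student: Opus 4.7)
The plan is to bypass the slack matrix and argue directly from the definition of extension complexity. If $\xc(P) = r$, then $P$ is the image of a polytope $Q$ with $r$ facets under a linear map, and I will show that every vertex coordinate of $P$ lies in a finitely generated subfield of $\mathbb{R}$ whose transcendence degree over $\mathbb{Q}$ is at most $r^2$. By the theorem of Conforti, Faenza, Fiorini, Grappe and Tiwary cited in Section~\ref{sec:slack_nnegrk}, I may take $Q$ to be a bounded polytope, and after restricting to its affine hull I may assume $Q \subseteq \mathbb{R}^{d'}$ is full-dimensional; since a $d'$-polytope has at least $d'+1$ facets, this forces $d' \leqslant r-1$. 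Let $\pi : \mathbb{R}^{d'} \to \mathbb{R}^2$ be the linear projection with $\pi(Q) = P$. The key reduction is to normalize $\pi$: because $\dim P = 2$, the map $\pi$ has rank $2$, so by an invertible linear change of coordinates in $\mathbb{R}^{d'}$ I may assume that $\pi$ is the projection onto the first two coordinates. In these new coordinates $Q = \{y \in \mathbb{R}^{d'} : Cy \leqslant d\}$ for some $C \in \mathbb{R}^{r \times d'}$ and $d \in \mathbb{R}^r$, and the pair $(Q,\pi)$ is thus specified by only $rd' + r = r(d'+1) \leqslant r^2$ real numbers.

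Next, each vertex $v_i$ of $P$ equals the first two coordinates of some vertex $y_i$ of $Q$ (since $\pi^{-1}(v_i) \cap Q$ is a nonempty bounded face of $Q$), and $y_i$ is the unique solution of $d'$ linearly independent tight facet equations, i.e., $y_i = (C_{T_i})^{-1} d_{T_i}$ for some $T_i \subseteq [r]$ of size $d'$. Hence every coordinate of every vertex of $P$ is a rational function (with integer coefficients) of the entries of $C$ and $d$, so the field $K$ generated over $\mathbb{Q}$ by the entries of $C$ and $d$ contains all $2n$ vertex coordinates. On the one hand, $K$ is generated by at most $r(d'+1) \leqslant r^2$ elements, so its transcendence degree over $\mathbb{Q}$ is at most $r^2$. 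On the other hand, by genericity the $2n$ vertex coordinates of $P$ are algebraically independent over $\mathbb{Q}$, so they alone already contribute transcendence degree $2n$ to $K$. Combining the two bounds, $2n \leqslant r^2$, and therefore $r \geqslant \sqrt{2n}$.

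The critical step is the normalization of $\pi$ to a coordinate projection, which absorbs the $2d'$ parameters that would otherwise describe $\pi$ into the change of basis and produces the clean bound $r^2$ rather than the weaker $r(d'+1) + 2d'$. This normalization is available precisely because $\dim P = 2$ forces $\pi$ to be surjective; the remainder of the proof is an elementary count of generators against transcendence degree, together with the well-known fact that a minimum-size extension can be taken to be a polytope.
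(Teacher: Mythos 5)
Your proposal is correct and follows essentially the same route as the paper's proof: express every vertex coordinate of $P$ via Cramer's rule as a rational function of the $r(d'+1)$ reals describing the extension, bound the transcendence degree of the resulting field by that count, and use the fact that a $d'$-dimensional polytope has at least $d'+1$ facets to get $2n \leqslant r(d'+1) \leqslant r^2$. The only difference is that you spell out the "without loss of generality" steps (boundedness of $Q$ and the change of coordinates making $\pi$ a coordinate projection) that the paper asserts without detail.
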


\begin{proof}
Let $\alpha_1$, \ldots, $\alpha_{2n}$ denote the coordinates of the $n$ vertices of $P$, listed in any order. Thus $X := \{\alpha_1,\ldots,\alpha_{2n}\}$ is algebraically independent over $\mathbb{Q}$.

Now suppose that $P$ is the projection of a $d$-dimensional polytope $Q$ with $k$ facets. Without loss of generality, we may assume that $Q$ lives in $\mathbb{R}^d$ and that the projection is onto the two first coordinates. 

Consider any linear description of $Q$. This description is defined by $k(d+1)$ real numbers: the $kd$ entries of the constraint matrix and the $k$ right-hand sides. We denote these reals as $\beta_1$, \ldots, $\beta_{k(d+1)}$. By Cramer's rule, each $\alpha_i$ can be written as $\alpha_i = \frac{f_i(\beta_1,\ldots,\beta_{k(d+1)})}{g_i(\beta_1,\ldots,\beta_{k(d+1)})}$ where $f_i$ and $g_i$ are polynomials with rational coefficients and $g_i(\beta_1,\ldots,\beta_{k(d+1)}) \neq 0$. In particular, this means that each $\alpha_i$ is in the extension field $L := \mathbb{Q}(\beta_1,\ldots,\beta_{k(d+1)})$. 

Because $X$ is algebraically independent over $\mathbb{Q}$ and $X \subseteq L$, the transcendence degree of $L/\mathbb{Q}$ is at least $2n$. But on the other side, the transcendence degree of $L/\mathbb{Q}$ is at most $k(d+1)$. Indeed, letting $Y := \{\beta_1,...,\beta_{k(d+1)}\}$, we have $\mathbb{Q}(Y) = L$ and thus $L/\mathbb{Q}(Y)$ is algebraic. It follows that $k(d+1) \geqslant 2n$. Because $k \geqslant d+1$, we see that $k^2 \geqslant 2n$, hence $k \geqslant \sqrt{2n}$.
\end{proof}

\section{Polygons with integer vertices}\label{sec:grid_polygons}
Since encoding transcendental numbers would require an infinite number of bits, an objection might be raised that Theorem \ref{thm:lb_trans} is not very satisfying. In this section we provide a slightly weaker lower bound with polygons whose vertices can be encoded efficiently. In particular we will now show that for every $n$ there exist polygons with vertices on an  $O(n) \times O(n^2)$ grid and whose extension complexity is large. To do this we will need a slightly modified version of a rounding lemma proved by Rothvo\ss{}~\cite{Rothvoss11}, see Lemma \ref{lem:rounding} below. 

For a matrix $A$ let $A_\ell$ (resp. $A^\ell$) denote the $\ell$-th row (resp. $\ell$-th column) of  $A$. Similarly, for a subset $I$ of row indices of $A$, let $A_I$ denote the submatrix of $A$ obtained by picking the rows indexed by the elements of $I$. 

Let $\lfactor$ and $\rfactor$ be $m\times r$ and $r\times n$ nonnegative matrices. Since below $\lfactor$ and $\rfactor$ will be respectively the left and right factor of a factorization of some slack matrix, we can assume that no column of $\lfactor$ is identically zero and, similarly, no row of $\rfactor$ is identically zero. The pair $\lfactor,\rfactor$ is said to be {\DEF normalized} if  $\|\lfactor^\ell\|_\infty = \|\rfactor_\ell\|_\infty$ for every $\ell \in [r].$ Since multiplying a column $\ell$ of $\lfactor$ by $\lambda > 0$ and simultaneously dividing row $\ell$ of $\rfactor$ by $\lambda$ leaves the product $\lfactor\rfactor$ unchanged, we can always scale the rows and columns of two matrices so that they are normalized without changing $\lfactor\rfactor$.

\begin{lem}[Rothvo\ss~\cite{Rothvoss11}] \label{lem:bd_normalized}
If the pair $\lfactor, \rfactor$ is normalized, then $\max\{\|\lfactor\|_\infty, \|\rfactor\|_\infty\}\leqslant \sqrt{ \|\lfactor\rfactor\|_\infty}$.
\end{lem}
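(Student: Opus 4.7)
The plan is to exploit nonnegativity and normalization at a single column/row index. Let $M := \max\{\|\lfactor\|_\infty, \|\rfactor\|_\infty\}$. Without loss of generality, the maximum is attained in $\lfactor$, so $\|\lfactor\|_\infty = M$, and there exists a column index $\ell \in [r]$ and a row index $i$ with $\lfactor_{i\ell} = \|\lfactor^\ell\|_\infty = M$. (If the maximum happens to be attained instead in $\rfactor$, the argument is symmetric: swap the roles of $\lfactor$ and $\rfactor$.)

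Now the normalization hypothesis is used at the particular index $\ell$ witnessing the maximum: by assumption $\|\rfactor_\ell\|_\infty = \|\lfactor^\ell\|_\infty = M$, so there exists a column index $j$ with $\rfactor_{\ell j} = M$. The point is that normalization synchronizes the column of $\lfactor$ and the row of $\rfactor$ carrying the largest entry, so we can select a witnessing pair $(i,\ell)$ and $(\ell,j)$ that share the middle index $\ell$.

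From here the bound is immediate from nonnegativity: every term in the sum defining $(\lfactor \rfactor)_{ij}$ is $\geqslant 0$, so
\[
\|\lfactor\rfactor\|_\infty \;\geqslant\; (\lfactor\rfactor)_{ij} \;=\; \sum_{k=1}^r \lfactor_{ik}\,\rfactor_{kj} \;\geqslant\; \lfactor_{i\ell}\,\rfactor_{\ell j} \;=\; M^2.
\]
Taking square roots gives $M \leqslant \sqrt{\|\lfactor\rfactor\|_\infty}$, which is exactly the claimed inequality.

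There is no real obstacle here; the only subtlety worth flagging explicitly is that normalization is used pointwise at the index $\ell$ where the overall max of $\lfactor$ lives, not as a global averaging statement. The nonnegativity of $\lfactor$ and $\rfactor$ (guaranteed from the outset because they come from a nonnegative factorization of a slack matrix) is what allows us to throw away all summands in $(\lfactor\rfactor)_{ij}$ except the single term $\lfactor_{i\ell}\rfactor_{\ell j}$.
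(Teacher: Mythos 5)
Your proposal is correct and is essentially the paper's argument: the paper phrases it as a proof by contradiction, but the key step is identical, namely using normalization at the index $\ell$ of the maximal entry of $\lfactor$ to find a matching large entry $\rfactor_{\ell j}$, and then dropping all other nonnegative summands of $(\lfactor\rfactor)_{ij}$ to get $(\lfactor\rfactor)_{ij} \geqslant \lfactor_{i\ell}\rfactor_{\ell j}$. Your direct formulation versus the paper's contradiction is only a cosmetic difference.
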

\begin{proof}
Let $S := \lfactor\rfactor$. Suppose, for the sake of contradiction, that the assertion does not hold. Without loss of generality, we may assume that $\|\lfactor\|_\infty > \sqrt{ \|\lfactor\rfactor\|_\infty}$. Thus $\lfactor_{i\ell} > \sqrt{ \|\lfactor\rfactor\|_\infty}$ for some indices $i$ and $\ell$. Because $\lfactor,\rfactor$ is normalized, $\|\rfactor_\ell\|_\infty = \|\lfactor^\ell\|_\infty > \sqrt{ \|\lfactor\rfactor\|_\infty}$ and there must be an index $j$ such that $\rfactor_{\ell j} > \sqrt{ \|\lfactor\rfactor\|_\infty}.$ Then $S_{ij} \geqslant \lfactor_{i\ell} \rfactor_{\ell j} >  \|\lfactor\rfactor\|_\infty,$ which is a contradiction.
\end{proof}

Consider a set of $n$ convex independent points $V$ in the plane lying on an integer grid of size polynomial in $n$, its convex hull $P := \conv(V)$, and $X := \mathbb{Z}^2 \cap P$. The next crucial lemma (adapted from a similar result in~\cite{Rothvoss11}) implies that the description of an extension $Q := \{ (x,y) \mid Ax+\lfactor y=b, y \geq 0\}$ for $P$ \--- potentially containing irrational numbers \--- can be rounded such that an integer point $x$ is in $X$ if and only if there is a $y \geq 0$ such that $\bar{A}x + \bar{\lfactor}y \approx \bar{b}$ holds for the rounded system. Moreover all coefficients in the rounded system come from a domain which is bounded by a polynomial in $n$.

\begin{lem} \label{lem:rounding}
For $d, N\geq2$ let $V =\{v_1,\ldots,v_n\} \subseteq \mathbb{Z}^d$ be a convex independent and non-empty set of points with $\|v_i\|_\infty \leqslant N$ for $i \in [n]$. Let $P:= \conv(V)$ and let $X:= P\cap \mathbb{Z}^d.$ Denote $r := \xc(P)$ and $\Delta := ((d+1)N)^{d}$. Then there are matrices $\bar{A} \in \mathbb{Z}^{(d+r) \times d}, \bar{\lfactor} \in (\frac{1}{4r(d+r)\Delta}\mathbb{Z}_{+})^{(d+r) \times r}$
and a vector $\bar{b} \in \mathbb{Z}^{d+r}$ with $\|\bar{A}\|_{\infty}, \|\bar{b}\|_{\infty}, \|\bar{\lfactor}\|_{\infty} \leqslant \Delta$
such that 
\[
  X = \bigg\{ x \in \mathbb{Z}^d \mid \exists y\in [0,\Delta]^r: \| \bar{A}x + \bar{\lfactor}y - \bar{b} \|_{\infty} \leqslant \frac{1}{4(d+r)} \bigg\}.
\]
\end{lem}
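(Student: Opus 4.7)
The plan is to extract an extension of $P$ from Theorem~\ref{thm:Y91}, control the magnitudes of its coefficients via Lemma~\ref{lem:bd_normalized}, and then round all entries to a sufficiently fine rational grid so that integer membership in $P$ is preserved. Since $V\subseteq\{-N,\ldots,N\}^d\cap\mathbb Z^d$, each facet of $P$ is spanned by $d$ affinely independent vertices, so Cramer's rule produces an integer inequality description $Ax\leq b$ with $A\in\mathbb Z^{m\times d}$, $b\in\mathbb Z^m$, and $\|A\|_\infty,\|b\|_\infty\leq ((d+1)N)^d=\Delta$. Each entry of $S(P)$ is then at most $(d+1)N\Delta=\Delta^{1+1/d}$ in absolute value, and applying Theorem~\ref{thm:Y91} followed by the rescaling preceding Lemma~\ref{lem:bd_normalized} yields a rank-$r$ normalized factorization $S(P)=\lfactor\rfactor$ with $\|\lfactor\|_\infty,\|\rfactor\|_\infty\leq\Delta^{(d+1)/(2d)}\leq\Delta$ (using $d\geq 2$).

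Next, I would work with the extension $Q=\{(x,y)\in\mathbb R^{d+r}:Ax+\lfactor y=b,\ y\geq 0\}$ from Section~\ref{sec:slack_nnegrk}, which projects onto $P$; for every $x\in X$ the corresponding $y$ can be chosen as a convex combination of columns of $\rfactor$, so $y\in[0,\Delta]^r$. Since $[A\mid\lfactor]$ has $d+r$ columns, its row rank is at most $d+r$, and all $m$ equations are linear consequences of a well-chosen subset of $d+r$ rows. I would take such a subsystem $A'x+T'y=b'$ with $A'\in\mathbb Z^{(d+r)\times d}$, $b'\in\mathbb Z^{d+r}$, and $T'\in\mathbb R_+^{(d+r)\times r}$ of $\infty$-norm at most $\Delta$, set $\bar A:=A'$ and $\bar b:=b'$, and define $\bar\lfactor$ by rounding each entry of $T'$ down to the nearest multiple of $\tfrac{1}{4r(d+r)\Delta}$ (entry-wise error at most $\tfrac{1}{4r(d+r)\Delta}$).

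For the forward inclusion, any $x\in X$ admits $y\in[0,\Delta]^r$ with $A'x+T'y=b'$ exactly, and the row-wise rounding perturbation is bounded by $r\cdot\tfrac{1}{4r(d+r)\Delta}\cdot\Delta=\tfrac{1}{4(d+r)}$, giving the required slack. For the converse, if $x\in\mathbb Z^d$ and $y\in[0,\Delta]^r$ satisfy the rounded system then the triangle inequality yields $\|\bar Ax+T'y-\bar b\|_\infty\leq\tfrac{1}{2(d+r)}$; combined with $T'y\geq 0$ this forces $\bar A_ix\leq\bar b_i+\tfrac{1}{2(d+r)}$ for every selected $i$, and integrality of $\bar A_ix$ and $\bar b_i$ yields $\bar A_ix\leq\bar b_i$, certifying $x\in P$ and hence $x\in X$.

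The main obstacle is the row-selection step: the $d+r$ chosen rows must not only span the row space of $[A\mid\lfactor]$ but must also have the stronger property that $\bar Ax\leq\bar b$ for integer $x$ implies $Ax\leq b$, so that the converse step above actually places $x$ in $P$ rather than merely in some larger integer-free relaxation. This is precisely the point at which the ``slight modification'' of~\cite{Rothvoss11} alluded to in the paragraph preceding the lemma does the real work; once the right basis is identified, the remainder is a routine bookkeeping argument built on the Cramer bound and Lemma~\ref{lem:bd_normalized}.
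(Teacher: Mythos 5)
Your setup (the integer facet description with the Cramer bound $\|A\|_\infty,\|b\|_\infty\leq\Delta$, the normalized factorization and the bound $\|\lfactor\|_\infty,\|\rfactor\|_\infty\leq\Delta$ via Lemma \ref{lem:bd_normalized}, and the forward inclusion by taking $y$ as a convex combination of columns of $\rfactor$ with rounding error at most $\frac{1}{4(d+r)}$) matches the paper's proof. But the converse inclusion has a genuine gap, which you flag yourself and then leave unresolved. As written, your argument only shows that an integer $x$ admitting a feasible $y$ for the rounded system satisfies the $d+r$ \emph{selected} inequalities $\bar A_i x\leq\bar b_i$; this does not certify $x\in P$, since $P$ has $m$ facets and $m$ can vastly exceed $d+r$ (for the polygons of Theorem \ref{thm:lb_grid} that is the entire point). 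The stronger property you ask of the row selection --- that $\bar Ax\leq\bar b$ for integer $x$ already implies $Ax\leq b$ --- is not established anywhere, and in general no subset of $d+r$ facet inequalities has it; if such a subset existed, the lifting variables $y$ would be superfluous altogether.

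The paper closes the converse differently, and this is the real content of the lemma. One chooses $I$ to index a row basis of the \emph{lifted} matrix $[A\mid\lfactor]$ that maximizes the volume of the parallelepiped spanned by $\{(A_i,\lfactor_i)\mid i\in I\}$. By Cramer's rule this guarantees that every other row satisfies $(A_\ell,\lfactor_\ell)=\sum_{i\in I}\lambda_i(A_i,\lfactor_i)$ with $|\lambda_i|\leq 1$, and $\sum_{i\in I}\lambda_i b_i=b_\ell$ by consistency of $Ax+\lfactor y=b$. Now if $x\in\mathbb{Z}^d\setminus X$, integrality gives a facet with $A_\ell x\geq b_\ell+1$, and then for \emph{every} $y\in[0,\Delta]^r$ one has $1\leq A_\ell x-b_\ell+\lfactor_\ell y=\sum_{i\in I}\lambda_i(A_i x-b_i+\lfactor_i y)$, which forces $\|A_I x+\lfactor_I y-b_I\|_\infty\geq\frac{1}{d+r}$; after subtracting the rounding perturbation of at most $\frac{1}{4(d+r)}$ the residual of the rounded system still exceeds $\frac{1}{4(d+r)}$, so $x\notin Y$. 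In other words, the exclusion of bad integer points is carried by the joint $(x,y)$ equality system together with the bounded multipliers $\lambda_i$, not by the selected inequalities alone; without the maximal-volume choice of $I$ (or some equivalent control of the $\lambda_i$), the ``routine bookkeeping'' you defer to cannot be completed.
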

\begin{proof}
Let $Ax \leqslant b$ be a  non-redundant description of $P$ with integral coefficients.
We may assume (see, e.g., \cite[Lemma D.4.1]{HindrySilverman}) that $\|A\|_{\infty},\|b\|_{\infty} \leqslant \Delta = ((d+1)N)^{d}$. Since $\xc(P)=r,$ by Yannakakis' Theorem \ref{thm:Y91} there exist matrices $\lfactor \in \mathbb{R}^{m\times r}_+$ and $\rfactor \in \mathbb{R}^{r\times n}_+$ such that $S:=\lfactor\rfactor$ is the slack-matrix of $P,$ and $P=\{x\in\mathbb{R}^d\mid \exists y\in\mathbb{R}^r: Ax+\lfactor y=b, y\geqslant 0\}$. Without loss of generality assume that the pair $\lfactor,\rfactor$ is normalized. Note that 
$$
\|S\|_{\infty} = \max_{i \in [m] \atop j \in [n]} (b_i-A_iv_j) \leqslant \Delta + dN\Delta \leqslant \Delta^2.
$$
Since $\lfactor,\rfactor$ are normalized, using Lemma \ref{lem:bd_normalized}, we have that $\|\lfactor\|_\infty\leqslant\Delta$ and $\|\rfactor\|_\infty\leqslant\Delta.$ 

Let $W := \mathop{\mathrm{span}}(\{(A_i,\lfactor_i) \mid i \in [m]\})$ be the row span of the constraint matrix of the system $Ax+\lfactor y=b$ and let $k := \dim(W)$ be the dimension of $W.$ Choose $I\subseteq\{ 1,\ldots,m\}$ of size $|I| = k$ such that the volume of the parallelepiped spanned by the vectors $\{ (A_i,\lfactor_i) \mid i \in I\},$ denoted by $\mathop{\mathrm{vol}}(\{ (A_i,\lfactor_i) \mid i \in I\}),$ is maximized. Let $\lfactor_I'$ be the matrix obtained from rounding the coefficients of $\lfactor_I$ to the nearest 
multiple of $\frac{1}{4r(d+r)\Delta}$. Our choice will be $\bar{A} := A_I$, $\bar{\lfactor}:=\lfactor_{I}'$ and $\bar{b}:=b_I$. Let 
$$
  Y :=  \bigg\{ x \in \mathbb{Z}^d \mid \exists y\in [0,\Delta]^r: \| A_Ix + \lfactor_I'y - b_I \|_{\infty} \leqslant \frac{1}{4(d+r)} \bigg\}.
$$
Then it is sufficient to show that $X=Y.$

\begin{claim} $X\subseteq Y$. \end{claim} 
\begin{proofofclaim}
Consider an arbitrary vertex $v_j \in V$. Since, $S=TU,$
we can choose $y := \rfactor^j \geqslant 0$ such that
$Av_j + \lfactor y = b$. 
Since $\lfactor,\rfactor$ are normalized, we have that  $\| y \|_{\infty} \leqslant \|\rfactor\|_{\infty} \leqslant \Delta$.
Note that $\|\lfactor - \lfactor'\|_{\infty} \leqslant \frac{1}{4r(d+r)\Delta}$. 
By the triangle inequality
\begin{eqnarray*}
\|A_Iv_j+\lfactor_I'y - b_I\|_{\infty} &\leqslant& 
\|\underbrace{A_Iv_j + \lfactor_Iy - b_I}_{=0} + (\lfactor_I'- \lfactor_I)y\|_{\infty} \\
&\leqslant& r\cdot \underbrace{\|\lfactor_I'-\lfactor_I\|_{\infty}}_{\leqslant \frac{1}{4r(d+r)\Delta}} \cdot \underbrace{\|y\|_{\infty}}_{\leqslant \Delta} \leqslant \frac{1}{4(d+r)}
\end{eqnarray*}
Thus $v_j \in Y$ and hence $V \subseteq Y$. It follows that $X \subseteq Y$.
\end{proofofclaim}
\begin{claim} $X\supseteq Y$. \end{claim}
\begin{proofofclaim}
We show that $x \in \mathbb{Z}^d \backslash X$ implies $x \notin Y$. 
Since $x \notin X$ and $X\subseteq P$, there must be a row $\ell$ with
$A_{\ell}x > b_{\ell}$. Since $A,b$ and $x$ are integral, one even has $A_{\ell}x \geqslant b_{\ell}+1$.
Note that in general $\ell$ is not among the selected constraints with row indices in $I$.
But there are unique coefficients $\lambda \in \mathbb{R}^{k}$ such that we can express
constraint $A_{\ell}x + \lfactor_{\ell}y = b_{\ell}$ as a linear combination of those with indices in $I$, i.e.
\[
  \begin{pmatrix} A_{\ell}, \lfactor_{\ell} \end{pmatrix} = \sum_{i\in I} \lambda_{i}\begin{pmatrix} A_{i}, \lfactor_{i} \end{pmatrix}.
\]
It is easy to see that $\sum_{i\in I} \lambda_ib_i = b_{\ell}$, since
otherwise the system $Ax + \lfactor y = b$ could not have any solution $(x,y)$ at all and $P=\varnothing$. The next step is to bound the coefficients $\lambda_i$. 
Here we recall that by Cramer's rule
\[
  |\lambda_{i}| = \dfrac{\vol\big(\big\{ ( A_{i'}, \lfactor_{i'}) \mid i' \in I \backslash \{ i \} \cup \{ \ell \} \big\}\big)}{\vol\big(\big\{ ( A_{i'}, \lfactor_{i'})  \mid i' \in I\big\}\big)} \leqslant 1
\]
since we picked $I$ such that $\vol(\{ ( A_{i'}, \lfactor_{i'})  \mid i' \in I\})$ is maximized.
Fix an arbitrary $y \in [0,\Delta]^r$, then
\begin{eqnarray}
 1 \leqslant  |\underbrace{A_{\ell}x - b_{\ell}}_{\geqslant 1} + \underbrace{\lfactor_{\ell}y}_{\geqslant 0} | 
&=& \Big| \sum_{i\in I} \lambda_{i} (A_{i}x  - b_i + \lfactor_iy)\Big|  \label{eq:MainProof} \\
&\leqslant& \sum_{i\in I} \underbrace{|\lambda_{i}|}_{\leqslant 1}\cdot |A_{i}x  - b_i + \lfactor_iy| \nonumber \\
&\leqslant& (d+r)\cdot \|A_Ix - b_I + \lfactor_Iy\|_{\infty} \nonumber
\end{eqnarray}
using the triangle inequality and the fact that $|I| \leqslant d+r$.
Again making use of the triangle inequality yields
\begin{eqnarray}
 \|A_Ix - b_I + \lfactor_Iy\|_{\infty}  
&=&  \|A_Ix - b_I + \lfactor_I'y + (\lfactor_I-\lfactor_I')y\|_{\infty} \label{eq:MainProofII} \\
&\leqslant&  \|A_Ix - b_I + \lfactor_I'y \|_{\infty} + r\cdot \underbrace{\| \lfactor_I - \lfactor_{I}' \|_{\infty}}_{\leqslant \frac{1}{4r(d+r)\Delta}} \cdot \underbrace{\|y\|_{\infty}}_{\leqslant \Delta} \nonumber \\
&\leqslant& \|A_Ix - b_I + \lfactor_I'y \|_{\infty} + \frac{1}{4(d+r)} \nonumber
\end{eqnarray}
Combining \eqref{eq:MainProof} and \eqref{eq:MainProofII} gives  $\| A_Ix - b_I + \lfactor_I'y\|_{\infty} \geqslant \frac{1}{d+r} - \frac{1}{4(d+r)} > \frac{1}{4(d+r)}$ for all $y\in [0,\Delta]^r$ and consequently $x \notin Y$.
\end{proofofclaim}

The theorem follows. Note that by padding zeros, we can ensure that $\bar{A}$, $\bar{\lfactor}$ and $\bar{b}$ have exactly $d+r$ rows.
\end{proof}

Now we are ready to prove our lower bound for the extension complexity of polygons.

\begin{thm}\label{thm:lb_grid}
For every $n \geq 3$, there exists a convex $n$-gon $P$ with vertices in $[2n] \times [4n^2]$ and $\xc(P) = \Omega(\sqrt{n}/\sqrt{\log n})$. 
\end{thm}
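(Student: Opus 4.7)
The plan is a counting argument built on top of Lemma \ref{lem:rounding}.

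First, I would exhibit an explicit family of $2^{\Omega(n)}$ convex $n$-gons with vertices in $[2n]\times[4n^2]$. Take $V_0 := \{(i,i^2) : i \in [2n]\}$. These $2n$ points lie on the parabola $y=x^2$ and are therefore in strictly convex position, so every $n$-element subset $V \subseteq V_0$ is the vertex set of a convex $n$-gon $P_V := \conv(V)$ with integer vertices in $[2n]\times[4n^2]$. Distinct subsets yield distinct polygons (each $V$ is the unique vertex set of $P_V$), giving $\binom{2n}{n} = 2^{2n - O(\log n)}$ candidate polygons.

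Second, I would fix a threshold $r^\star$ and bound the number of polygons $P$ in the family with $\xc(P) \leqslant r^\star$. Apply Lemma \ref{lem:rounding} with $d=2$, $N=4n^2$, so $\Delta = (3\cdot 4n^2)^2 = O(n^4)$. For each $P = P_V$ in our family, the vertex set $V$ is integral, so $P = \conv(X)$ where $X = P \cap \mathbb{Z}^2$; by the lemma, $X$ is recoverable from the triple $(\bar{A},\bar{\lfactor},\bar{b})$. Hence $P$ is uniquely determined by this triple, and the number of $P$'s with $\xc(P) = r$ is at most the number of admissible triples with the stated shape.

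Third, I would count those triples. The integer matrix $\bar{A} \in \mathbb{Z}^{(2+r)\times 2}$ and vector $\bar{b} \in \mathbb{Z}^{2+r}$ have entries in $[-\Delta,\Delta]$, contributing at most $(2\Delta+1)^{3(2+r)}$ choices; the matrix $\bar{\lfactor}$, whose $r(2+r)$ entries lie in $\frac{1}{4r(2+r)\Delta}\mathbb{Z}_+ \cap [0,\Delta]$, contributes at most $(4r(2+r)\Delta^2+1)^{r(2+r)}$ choices. Since $\log\Delta = O(\log n)$ and we may assume $r \leqslant n$ (otherwise the bound is already trivially satisfied), taking logarithms gives at most $2^{C r^2 \log n}$ triples for some absolute constant $C$. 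Summing over $r \leqslant r^\star$ still gives $2^{O((r^\star)^2 \log n)}$ possible polygons with extension complexity at most $r^\star$.

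Finally, if every polygon in the family satisfied $\xc(P) \leqslant r^\star$, then $2^{\Omega(n)} \leqslant 2^{O((r^\star)^2 \log n)}$, which forces $r^\star = \Omega(\sqrt{n/\log n})$. Hence at least one $P$ in the family attains $\xc(P) = \Omega(\sqrt{n/\log n})$, proving the theorem. The one real technical point is the parameter tracking through Lemma \ref{lem:rounding}: one must verify that the scale $\frac{1}{4r(2+r)\Delta}$ and the uniform bound $\Delta = O(n^4)$ only cost $O(\log n)$ bits per entry of $\bar{\lfactor}$, so that the total encoding length scales as $O(r^2 \log n)$ rather than something worse. Everything else is essentially routine encoding and counting.
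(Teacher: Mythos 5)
Your proposal is correct and follows essentially the same route as the paper: the same family of $\binom{2n}{n}$ polygons on the parabola $\{(z,z^2): z\in[2n]\}$, the same injectivity of the map to rounded systems $(\bar{A},\bar{\lfactor},\bar{b})$ from Lemma~\ref{lem:rounding}, and the same $2^{O(r^2\log n)}$ count of systems, yielding $\xc(P)=\Omega(\sqrt{n/\log n})$. The only difference is bookkeeping (you sum over a threshold $r^\star$ while the paper pads all systems to a common size determined by the maximum extension complexity $R$), which does not change the argument.
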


\begin{proof}
The $2n$ points of the set $Z := \{(z, z^2 ) \mid z \in [2n]\}$ are obviously convex independent. In other words, every subset $X \subseteq Z$ of size $|X| = n$ yields a different convex $n$-gon. The number of such $n$-gons is ${2n \choose n} \geqslant 2^n$. Let $R := \max\{\xc(\conv(X)) \mid X \subseteq Z, |X| = n\}$. Lemma~\ref{lem:rounding} provides a map $\Phi$ which takes $X$ as input and provides the rounded system $(\bar{A}, \bar{\lfactor}, \bar{b})$. (If the choice of $A$, $b$ and $I$ is not unique, make an arbitrary canonical choice.) By padding zeros, we may assume that this system is of size $(2 + R) \times (3 + R)$. 

Also, Lemma \ref{lem:rounding} guarantees that for each system $(\bar{A}, \bar{\lfactor} , \bar{b})$, the corresponding set $X$ can be reconstructed. 
In other words, the map $\Phi$ must be injective and the number of such system must be at least $2^n$.
Thus it suffices to determine the number of such systems: the entries in each 
system $(\bar{A}, \bar{\lfactor}, \bar{b})$ are integer multiples of $\frac{1}{4r(d+r)\Delta} = \frac{1}{4r(2+r)144n^{4}}$ for some $r \in [R]$ using $d=2$, $N=4n^2$, $\Delta = (12n^2)^2=144n^4$. Since no entry exceeds $\Delta$, for each entry there are at most $1 + \sum_{r=1}^R (165888\,r(2+r)n^{8}) \leqslant cn^{11}$ many possible choices for some fixed constant $c$ (note that $R \leqslant n$). Thus the number of such systems is bounded by $(cn^{11})^{(3+R)\cdot (2+R)} \leqslant 2^{c' \log{n}\cdot R^2}$ for some constant $c'$.


We conclude that $2^{c'\log_2 {n}\cdot R^2} \geqslant 2^n$ and thus $R = \Omega(\sqrt{n} / \sqrt{\log n}).$
\end{proof}

\section{Concluding Remarks}

Although the two lower bounds presented here on the worst case extension complexity of a $n$-gon are $\tilde{\Omega}(\sqrt{n})$, it is plausible that the true answer is $\tilde{\Omega}(n)$. We leave this as an open problem.

\section*{Acknowledgements}

We thank Stefan Langerman for suggesting the proof of Theorem~\ref{thm:lb_trans}. We also thank Volker Kaibel and Sebastian Pokutta for stimulating discussions. Finally, we thank the anonymous referee for his comments which helped improving the text.

\bibliographystyle{plain}
\bibliography{extensions}

\newpage

\end{document}